\newcolumntype{P}[1]{>{\centering\arraybackslash}p{#1}}
\def\bx{\boldsymbol{x}}
\def\bSigma{\boldsymbol{\Sigma}}
\def\bPhi{\boldsymbol{\Phi}}
\def\mba{\mathbf{a}}
\def\mbb{\mathbf{b}}
\def\mbw{\mathbf{w}}
\def\mbx{\mathbf{x}}
\def\mby{\mathbf{y}}
\def\mbA{\mathbf{A}}
\def\mbH{\mathbf{H}}
\def\mbI{\mathbf{I}}
\def\mbW{\mathbf{W}}
\def\mbX{\mathbf{X}}
\def\mbY{\mathbf{Y}}
\def\calA{\mathcal{A}}
\def\calH{\mathcal{H}}
\def\calM{\mathcal{M}}
\def\calR{\mathcal{R}}
\def\calS{\mathcal{S}}
\def\calT{\mathcal{T}}
\def\calU{\mathcal{U}}
\def\bzero{\boldsymbol{0}}
\newcommand{\complexC}[1]{\mathds{C}^{#1}}
\newcommand{\expecE}[1]{\mathds{E}\left\{{#1}\right\}}
\newcommand{\realR}[1]{\mathds{R}^{#1}}
\def\Diag#1{\mathrm{Diag}\left(#1\right)}
\newtheorem{theorem}{Theorem}
\def\T{\top}
\def\H{\mathrm{H}}
\let\emptyset\varnothing
\newcommand*{\rom}[1]{\expandafter\@slowromancap\romannumeral #1@}
\title{Submodular Optimization for Placement of Intelligent Reflecting Surfaces in  sensing systems}
\name{Zahra Esmaeilbeig$^1$, 
 Kumar Vijay Mishra$^{2}$, Arian Eamaz$^1$, and Mojtaba Soltanalian$^1$
\thanks{This work was sponsored in part by the Army Research Office, accomplished under Grant Number W911NF-22-1-0263, and in part by the Department of Navy award N00014-22-1-2666 issued by the  Office of Naval Research. Any opinions, findings, conclusions, or recommendations expressed in this material are those of the authors and should not be interpreted as representing the official policies, either expressed or implied, of the Army Research Office, the Office of Naval Research, or the US Government. The U.S. Government is authorized to reproduce and distribute reprints for
Government purposes notwithstanding any copyright notation herein.}}
\address{$^{1}$ECE Department, University of Illinois Chicago, USA\\
 $^{2}$United States DEVCOM Army Research Laboratory, USA\vspace{-16pt}}
\begin{document}
\setlength{\abovedisplayskip}{3pt}
\setlength{\belowdisplayskip}{3pt}
\maketitle
\begin{abstract}
Intelligent reflecting surfaces (IRS) are increasingly being investigated for novel sensing applications for improving target estimation and detection. While the optimization of IRS phase-shifts has been studied extensively, the optimal placement of multiple IRS  platforms for sensing applications is relatively unexamined. In this paper, we focus on selecting optimized locations of IRS platforms by considering a mutual information (MI)-based criterion within the radar coverage area. 
We demonstrate the submodularity of the MI criterion and then tackle the design problem by means of a constant-factor approximation algorithm for submodular optimization. Numerical results validate the proposed submodular optimization framework for optimal IRS placement with the worst-case performance bounded to $1-1/e\approx 63 \%$.
 \end{abstract}
\vspace{-6pt}
\begin{keywords} 
Mutual information, non-line-of-sight radar, reconfigurable intelligent surface, sensor placement, submodularity.
\end{keywords}
\vspace{6pt}
\section{Introduction}
The performance of wireless communications and sensing systems is severely
reduced in poor channel conditions and blockage~\cite{elbir2022rise,wei2023ris}. 
Recently, the deployment of an intelligent reflecting surface (IRS) in the channel has been investigated to enhance the system performance for blocked or high-noise channels. In wireless communications, IRS has shown improvement in channel capacity \cite{an2023stacked} and security~\cite{tyrovolas2022performance}, interference mitigation \cite{wang2023}, and integrated sensing and communications \cite{elbir2022rise,esmaeilbeig2023quantized,wang2023stars,wei2022simultaneous,wei2023multi} and index modulation~\cite{hodge2023}. Similar gains for radar systems have been demonstrated by optimizing IRS  phase-shifts for target estimation~\cite{esmaeilbeig2022irs,esmaeilbeig2022joint,esmaeilbeig2022cramer} and detection~\cite{esmaeilbeig2023moving}.

The optimal placement of IRSs in the coverage area plays a crucial role in the performance of wireless systems. In IRS-aided wireless communications, there have been studies on the optimal placement of IRS for optimizing the throughput \cite{Huangplacement2022} and coverage \cite{zeng2020reconfigurable}. In \cite{stratidakis2022optimal}, the positioning of IRS platforms in mobile user environments is achieved by imposing a desired minimum threshold on the received power at users. In \cite{Ghose2023}, the optimal distance between the IRS and users in a device-to-device communications system was derived analytically. Other placement criteria such as maximization of receive signal-to-noise-ratio (SNR) \cite{ntontin2021optimal} and blind zone improvement \cite{chen2022reconfigurable} have also been considered for optimal IRS placement in millimeter-wave communications scenario.
 
In general, prior works on IRS-aided radars keep the deployment location of IRSs fixed and then focus on optimizing IRS phase-shifts. Recently, a near-optimal placement was investigated in~\cite{tohidi2023near} to maximize the extended line-of-sight (LoS) coverage provided by IRS platforms in an urban infrastructure. However, this study treated the IRS platforms as specular surfaces \cite{ulaby1981microwave}, thereby implying that the optimal IRS placement is determined irrespective of the IRS phase-shifts. 
 
 Contrary to prior works, we address the problem of optimal placement of IRS platforms tuned with predetermined phase-shifts. 
 We employ the mutual information (MI) between the channel state information (CSI) modified by IRS phase-shifts and the received signal. Previous works have employed MI for designing radar systems and signals~\cite{naghsh2013majorization,naghsh2014unimodular,radarsignaldesign2022,liu2020co}. 
 The IRS placement is a combinatorial search problem, wherein the best placement for IRS platforms is determined exhaustively from a finite set of possibilities. In the context of sensor selection, combinatorial search has been avoided by relaxing the problem to convex optimization \cite{joshi2009sensor}. Similarly, \cite{Godrich} formulates the selection of the sensors as a \textit{knapsack problem} for a distributed multiple-radar scenario and solves it through greedy search with the Cram\'{e}r-Rao lower bound (CRB) as a performance metric. More recently, deep learning methods~\cite{mishra2023sparse} have been employed. 
 
 A framework for combinatorial optimization with optimality guarantees is, indeed, of broad interest. We propose to employ submodular optimization to obtain optimal placement of IRSs with a proper optimality guarantee. Previously, sensor placement in ~\cite{shulkind2018sensor, tohidi2020submodularity} has shown that submodular optimization provides a greedy algorithm for \textit{constant-factor approximation} of the optimal solution. The approximation factor is dependent on the curvature of the submodular function. The solution obtained at the worst-case execution of the greedy algorithm is only up to a known factor of the optimal solution. In this paper, we prove the monotonicity and submodularity of the MI-based cost function for optimal IRS placement. Then, we develop a constant-factor approximation algorithm to decide the IRS locations in a greedy manner by exploiting the submodularity of the cost function. Our numerical experiments illustrate the superiority of the IRS placements obtained via the greedy algorithm in comparison to a random placement. We also compare our proposed method with the theoretical lower bound on the worst-case performance of the greedy algorithm. 

Throughout this paper, we use bold lowercase and bold uppercase letters for vectors and matrices, respectively. 
$(\cdot)^{\mathrm{H}}$ is the vector/matrix Hermitian transpose; The determinant of a matrix is denoted by $\operatorname{det}(.)$; the function $\textrm{diag}(.)$ returns the diagonal elements of the input matrix, while $\textrm{Diag}(.)$ produces a diagonal matrix with the same diagonal entries as its vector argument.
Given a reference set $\calA$ and a subset
$\calU \subseteq \calA$, the absolute complement  of $\calA$ in relation to $\calU$ is  denoted by $\calU \setminus \calA$. $|\calA|$ is the cardinality of the set $\calA$. The positive semidefiniteness of the matrix $\mbA$ is denoted by $\mbA \succeq 0$. Finally, $\mbA=\left[\begin{array}{c|c|c}
\mbA_{_{1}} &\ldots&\mbA_{_{M}}\end{array}\right]$ is a   matrix  made by   stacking the  blocks  $\mbA_{_{1}}, \ldots, \mbA_{_{M}}$ horizontally.
\section{Signal Model}
Consider a multi-IRS-assisted radar system (Fig.1), which comprises
a co-located MIMO radar with $N_t$ transmit and $N_r$ receive antennas, each arranged as uniform linear arrays (ULA) with inter-element spacing $d$. We deploy $M$ IRS platforms indexed as IRS$_{1}$, IRS$_{2}$, $\ldots$, IRS$_{M}$, each equipped with $N_m$ reflecting elements arranged as ULA. Denote the $N_t \times 1$ vector of all  transmit signals as $\bx(t)=[x_{_{1}}(t),\ldots,x_{_{N_t}}(t)]^{\T}$, where the continuous-time signal transmitted from the $n$-th antenna at time instant $t$ is $x_n(t)$. The steering vectors of the radar transmitter, receiver, and the $m$-th IRS are, respectively, 
\begin{align}
\mba_t(\theta)&=[1,e^{\textrm{j}\frac{2\pi}{\lambda}d sin \theta},\ldots,e^{\textrm{j}\frac{2\pi}{\lambda}d(N_t-1) sin \theta}]^{\top}, \\
\mba_r(\theta)&=[1,e^{\textrm{j}\frac{2\pi}{\lambda}d sin \theta},\ldots,e^{\textrm{j}\frac{2\pi}{\lambda}d(N_r-1) sin \theta}]^{\top}, and  \\
\mbb_m(\theta)&= [1,e^{\textrm{j}\frac{2\pi}{\lambda}d_m sin \theta},\ldots,e^{\textrm{j}\frac{2\pi}{\lambda}d_m(N_m-1) sin \theta}]^{\top},
\end{align}
where $\lambda$ is the carrier wavelength; $d$ and $d_m$ are inter-element spacings in the transceiver and IRS arrays, respectively. While $d$ is usually set to be half of the carrier wavelength, $d_m$ is much smaller at the subwavelength level. The IRS platforms are mounted on static platforms, whose optimal locations we intend to determine later.

Denote the angle between the radar-target, radar--IRS$_m$, and  target-IRS$_m$ by $\theta_{tr}$, $\theta_{ri,m}$, and $\theta_{ti,m}$, respectively. For the multi-IRS aided radar the non-line-of-sight (NLoS) channels  associated with IRS$_m$ are  $\mbH_{ri,m}=\mbb_m(\theta_{ri,m}) \mba_{t}^{\top}(\theta_{ri,m})\in \mathbb{C}^{N_m \times N_t}$ for radar-IRS$_{m}$; 
$\mbH_{it,m}=\mbb_{m}^{\top}(\theta_{ti,m}) \in \mathbb{C}^{1 \times N_m}$ for IRS$_{m}$-target; 
$\mbH_{ti,m}=\mbb_m(\theta_{ti,m})\in \mathbb{C}^{N_m\times 1}$ for target-IRS$_{m}$; and
$\mbH_{ir,m}=\mba_r(\theta_{ri,m}) \mbb^{\top}_m(\theta_{ri,m}) \in \mathbb{C}^{N_r\times N_m}$ for IRS$_{m}$-radar paths. Assume the $k$-th passive element of IRS$_{m}$ is tuned at the phase-shift $\phi{_{m,k}}\in[0,2\pi)$. Then, IRS$_{m}$ is characterized by the phase-shift matrix 
\begin{equation}
\bPhi_{m}=\Diag{[e^{\textrm{j}\phi_{_{m,1}}},\ldots,e^{\textrm{j}\phi_{_{m,N_m}}}]}\; \in \complexC{N_m \times N_m}.
\end{equation}

Consider a target located at a distance $d_{tr}$ with respect to (w.r.t.) the radar transceiver and complex reflectivity  $\alpha_{_{{m}}}$, which depends on the target size, target shape, and propagation path loss. The delay of the signal propagated in the radar-IRS$_{m}$-target-IRS$_{m}$-radar path is $\tau_{ritir,m}$. The line-of-sight (LoS) or radar-target-radar path is obstructed (Fig. 1). The transmit signal backscattered by the target and received by all $N_r$ receive antennas is 
\begin{align}
 \mby(t)&= \sum_{m=1}^{M}\alpha_{_{m}}\mbH_{ir,m}\bPhi_m\mbH_{ti,m} \mbH_{it,m}\bPhi_m\mbH_{ri,m}\bx(t-\tau_{ritir,m})\nonumber \\
& \hspace{5cm} +\mbw(t),\;\; \in \complexC{N_r \times 1}.
\end{align}
where $\mbw(t)\sim \mathcal{CN}(\bzero,\sigma^2\mbI_{N_t})$ denotes a stationary additive white Gaussian noise (AWGN). The relative time gaps between any two multipath signals are very small in comparison to the actual roundtrip delays, i.e., $\tau_{ritir,m} \approx \tau_0 =\frac{2d_{tr}}{c}$ for $m \in \{1,\ldots,M\}$ and  $c$ is the speed of light. Denote $$\mbH_m=\mbH_{ir,m}\bPhi_{m}\mbH_{ti,m} \mbH_{it,m}\bPhi_m\mbH_{ri,m}\in\mathbb{C}^{N_r\times N_t}.$$

We collect $N$  samples at the rate $1/T_s$ from $\mby(t)$, corresponding to the range cell of a notional target located at range $d_{tr}$,  The $N_r \times 1$ discrete-time received signal vector $\mby[n]= \mby(nT_s)$ is 
\begin{equation}
\label{eq_1}
\mby[n]=\sum_{m=1}^{M} \alpha_{_{m}} \mbH_{m} \mbx[n]+\mbw[n],\; n=0,\ldots,N-1,
\end{equation}\normalsize
where  
$\mbx[n]=\mbx(\tau_0+nT_s)\in \mathbb{C}^{N_t \times 1}$. The delay $\tau_0$ is aligned on-the-grid so that $n_0=\tau_0/T_s$ is an integer~\cite{mishra2017sub}. Define $\overline{\mbH}=\left[\begin{array}{c|c|c}
\alpha_{_1}\mbH_{_{1}}^{\H} &\ldots&\alpha_{_M}\mbH_{_{M}}^{\H}
\end{array}\right]^{\H} \in \mathbb{C}^{N_t \times MN_r}$ and rewrite the discrete-time $N_r\times 1$ received signal vector as 
\begin{equation}
\mby[n]=\overline{\mbH}\mbx[n] +\mbw[n],\; n=0,\ldots,N-1,
\end{equation}
where $\mbw[n]$ is the circularly symmetric complex white Gaussian noise vector, with a zero mean and variance $\sigma^2 \mbI_{N_r}$. Stacking $N$ discrete-time samples for all $N_r$  receiver antennas, the received signal is the $N_r \times N$ matrix 
 $\mbY=\left[\begin{array}{c|c|c}\mby[0]&\ldots&\mby[N-1]\end{array}\right]$. i.e., 
\begin{equation}\label{eq:model}
\mbY=\overline{\mbH} \mbX+\mbW,
\end{equation}
where $\mbX=\left[\begin{array}{c|c|c} \mbx[0]&\ldots&\mbx[N-1]\end{array}\right]\in \mathbb{C}^{N_t \times N}$ 
 and $\mbW=\left[\begin{array}{c|c|c}\mbw[0]&\ldots &\mbw[N-1]\end{array}\right]\in \mathbb{C}^{N_r \times N}$.
 
 In general, the placement of the IRS affects the SNR in the received signal. In the next section, we investigate the optimal placement of the IRS platforms based on a mutual information criterion. 
\begin{figure}[t]\label{fig_1}
\centering
\input{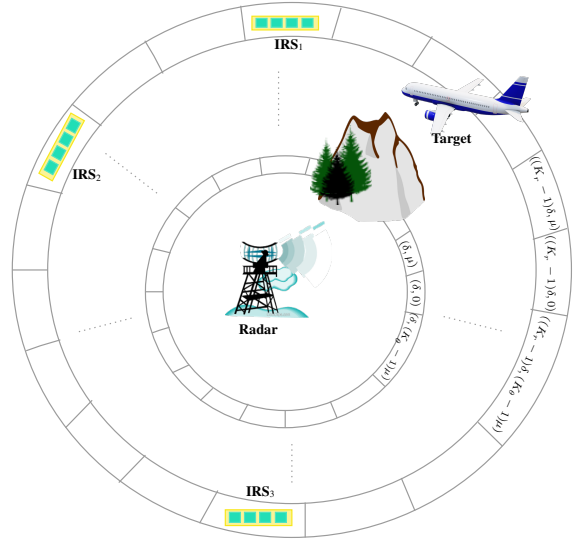}
\caption{Simplified illustration of IRS-aided radar system and range-azimuth bins $(r,\theta)\in \calA$. When the LoS is blocked, the target is sensed by the radar through the NLoS paths via the IRSs.} 
\end{figure}

\section{IRS Placement}
Our criterion for optimizing the placement of IRS platforms is the MI
\begin{align}
\mathrm{I}\left(\mbY;\overline{\mbH}\bigg| \mbX\right)&=  \calH\left(\mbY \bigg| \mbX\right)-\calH\left(\mbY \bigg| \overline{\mbH}, \mbX\right)\nonumber\\
&=\calH\left(\mbY \bigg| \mbX\right)-\calH\left(\mbW\right)
\end{align}
where $\calH(\cdot)$ is the Shannon entropy.  Discretize the range-azimuth plane of the radar coverage area, within which the IRSs are installed, with $K$ ranges with a spacing of $\delta$ denoted by the set $\calR = \{\delta,2\delta,\ldots,(K_{_{r}}-1)\delta\}$ and with $K$ azimuth angles with a spacing of $\mu$  denoted by the set $\mathcal{M} = \{\mu,2\mu,\ldots,(K_{_{\theta}}-1)\mu\}$; see Fig.1. Define the range-azimuth set $\calA$ as 
\begin{align}
\calA=\calR  \times \mathcal{M} = \left \{(r,\theta)\mid r \in \calR, \theta \in \mathcal{M} \right \}.
\end{align}

Denote the channel matrix contributed by the set of IRS platforms implemented in two-dimensional (2-D) locations in the set  $\calS$ by  $\overline{\mbH}_{_{\calS}}$. Then, we choose $\mathcal{S}$ by solving the optimization problem 
\begin{align}
\label{eq:opt1}
\mathcal{P}_{1}:~\underset{\calS \subseteq \calA}{\textrm{maximize}} &\quad \mathrm{I}\left(\mbY;\overline{\mbH}_{_{\calS}}\bigg| \mbX\right) \nonumber \\
\text{subject to} &\quad |\calS| \leq  M.
\end{align}
Note that monotone submodular maximization is nontrivial only
when constrained. Our goal is to deploy at most $M$ IRS platforms in the radar aperture  $\calA$. Theorem~\ref{theorem1} below demonstrates the submodularity and monotonicity of the objective function in~\eqref{eq:opt1}, which we then employ for a greedy algorithm with constant-factor approximation error to tackle the problem.
\begin{theorem} The conditional entropy of  $\mbY$ given $\mbX$ is 
\label{theorem1}
\begin{align}
\calH\left(\mbY \bigg| \mbX\right)
&=N\frac{\mathrm{ln}\left(\mathrm{det}(\overline{\mbH} \bSigma_{_{\mbX}}\overline{\mbH}^{\H}+\sigma^2 \mbI)\right)}{2}\nonumber\\
&+\frac{NN_r}{2}\left(1+\mathrm{ln}(2\pi)\right),  
\end{align}
where  $\bSigma_{_{\mbX}}=\expecE{\mbX \mbX^{H}}$ is the total transmit power.
\end{theorem}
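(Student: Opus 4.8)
The plan is to reduce $\calH(\mbY\mid\mbX)$ to the closed-form differential entropy of a multivariate Gaussian, exploiting the linear observation model $\mbY=\overline{\mbH}\mbX+\mbW$ from~\eqref{eq:model} and the whiteness of the noise across the $N$ snapshots. The essential idea is that $\mbY$ factors into $N$ independent Gaussian columns sharing a common covariance, so the total entropy is simply $N$ copies of a single-snapshot Gaussian entropy.

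First I would argue that each snapshot $\mby[n]=\overline{\mbH}\mbx[n]+\mbw[n]$ is a zero-mean Gaussian vector. Treating the placement-dependent channel $\overline{\mbH}$ as fixed and using that the transmit signal is characterized by the second-moment matrix $\bSigma_{\mbX}=\expecE{\mbX\mbX^{\H}}$ while the noise obeys $\mbw[n]\sim\mathcal{CN}(\bzero,\sigma^2\mbI_{N_r})$, the covariance of a single snapshot follows from the independence of signal and noise together with bilinearity of the covariance operator:
\begin{equation}
\Cov{\mby[n]}=\overline{\mbH}\,\bSigma_{\mbX}\,\overline{\mbH}^{\H}+\sigma^2\mbI_{N_r}\triangleq\mbR.
\end{equation}
The structural point to verify is that this $\mbR$ is common to all $n$ and that distinct snapshots are mutually independent, which holds because $\mbW$ is white across samples; hence $\mbY$ splits into $N$ independent, identically distributed Gaussian columns.

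Next I would invoke the differential entropy of a Gaussian vector of covariance $\mbR$, namely $h(\mby[n])=\tfrac12\ln\det(2\pi e\,\mbR)=\tfrac12\ln\det(\mbR)+\tfrac{N_r}{2}\bigl(1+\ln(2\pi)\bigr)$, and use additivity of entropy over independent components to obtain $\calH(\mbY\mid\mbX)=\sum_{n=0}^{N-1}h(\mby[n])=N\,h(\mby[0])$. Substituting $\mbR=\overline{\mbH}\bSigma_{\mbX}\overline{\mbH}^{\H}+\sigma^2\mbI$ and gathering the snapshot-independent constants into $\tfrac{NN_r}{2}(1+\ln(2\pi))$ then reproduces the claimed expression term for term.

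The principal obstacle is not algebraic but modeling-theoretic: the Gaussianity of $\mby[n]$ must be justified by positing a Gaussian (maximum-entropy) model for the transmit/channel contribution $\overline{\mbH}\mbx[n]$, and one must pin down precisely which quantities are averaged so that the covariance multiplying $\ln\det(\cdot)$ is exactly $\overline{\mbH}\bSigma_{\mbX}\overline{\mbH}^{\H}+\sigma^2\mbI$ rather than the noise-only covariance $\sigma^2\mbI$. A secondary point requiring care is the entropy convention: the constants $\tfrac12$ and $2\pi e$ in the statement are those of the \emph{real} multivariate Gaussian, so the derivation must either regard $\mby[n]$ as a real $N_r$-vector or explicitly account for the factor-of-two difference relative to the circularly-symmetric complex Gaussian entropy $\ln\det(\pi e\,\mbR)$.
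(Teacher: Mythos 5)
Your proof follows the same route as the paper's: decompose $\calH(\mbY\mid\mbX)$ over the $N$ independent snapshots via the chain rule and then apply the closed-form Gaussian entropy $\tfrac12\ln\det(2\pi e\,\mbR)$ with $\mbR=\overline{\mbH}\bSigma_{\mbX}\overline{\mbH}^{\H}+\sigma^2\mbI$; the paper's own proof is exactly this argument stated in one line. The two caveats you flag --- that conditioning literally on $\mbX$ would leave only the noise covariance $\sigma^2\mbI$ unless one instead averages over a Gaussian transmit signal with second moment $\bSigma_{\mbX}$, and that the constants $\tfrac12$ and $1+\ln(2\pi)$ belong to a real rather than a circularly-symmetric complex Gaussian --- are genuine ambiguities in the theorem as stated that the paper's proof silently glosses over, so raising them is a strength of your write-up rather than a gap in it.
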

\begin{IEEEproof} Considering  that $\mby[n]$ is a  multivariate complex Gaussian distribution,  the  conditional entropy is 
\begin{align}
\calH\left(\mbY \bigg| \mbX\right)&=\calH\left(\mby[0],\ldots, \mby[N-1] \bigg| \mbx[0],\ldots,\mbx[N-1]\right)\nonumber\\
&= \sum_{n=0}^{N-1} \calH\left (\mby[n] \bigg| \mbx[n]\right )=N\frac{\mathrm{ln}\left(\mathrm{det}(\overline{\mbH} \bSigma_{_{\mbX}}\overline{\mbH}^{\H}+\sigma^2 \mbI)\right)}{2} \nonumber \\&+\frac{NN_r}{2}\left(1+\mathrm{ln}(2\pi)\right),
\end{align}
where the second equality follows from the chain rule for entropy and independence of samples and the last equality is based on the entropy of a multivariate Gaussian distribution \cite{cover1999elements}. 
\end{IEEEproof}

 For notational simplicity, denote $\bSigma_x=P_{_{T}} \mbI $ and $\mbH_{_{\calS}}=\frac{\sqrt{P_{_T}}}{\sigma} \overline{\mbH}_{_{\calS}}$, and substituting the result of  Theorem 1 in $\mathcal{P}_{1}$, we  reach at the equivalent problem
\begin{align} \label{eq:opt3}
\mathcal{P}_{2}:~\underset{\calS \subseteq \calA}{\textrm{maximize}} &\quad \mathrm{ln}\left(\mathrm{det}(\mbH_{_{\calS}} \mbH_{_{\calS}}^{\H}+ \mbI)\right) \nonumber \\
\text{subject to} &\quad |\calS| \leq  M.
\end{align}
We aim to tackle the problem $\mathcal{P}_{2}$ through submodular optimization. To this end, we show in Theorem~\ref{theorem2} below that the objective function in $\mathcal{P}_{2}$ is monotonic and submodular.
\begin{theorem}
\label{theorem2}
  The set function  $f(\calS):2^{\calS} \rightarrow \realR{}$, where $f(\calS)=\mathrm{ln}\left(\mathrm{det}(\mbH_{_{\calS}} \mbH_{_{\calS}}^{\H}+ \mbI)\right)$ is a monotonically increasing and submodular function of the set  $\calS$. 
\end{theorem}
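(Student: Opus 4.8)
The plan is to exploit the additive positive-semidefinite (PSD) structure that each IRS location contributes to the Gram matrix, and then to invoke two standard facts about the log-determinant on the cone of positive-definite matrices: its monotonicity under the Loewner order and the antitonicity of matrix inversion.

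First I would recast $f$ in a fixed dimension. For each candidate location $s\in\calA$ define the per-location block $\mbB_s=\frac{P_T}{\sigma^2}|\alpha_s|^2\,\mbH_s^{\H}\mbH_s\succeq 0$. Because $\overline{\mbH}_{\calS}$ is formed by stacking the blocks indexed by $\calS$, one checks directly that $\mbH_{\calS}^{\H}\mbH_{\calS}=\sum_{s\in\calS}\mbB_s$. Sylvester's determinant identity then gives $\mathrm{det}(\mbH_{\calS}\mbH_{\calS}^{\H}+\mbI)=\mathrm{det}(\mbI+\sum_{s\in\calS}\mbB_s)$, so that $f(\calS)=\ln\mathrm{det}(\mbM_{\calS})$ with $\mbM_{\calS}:=\mbI+\sum_{s\in\calS}\mbB_s\succ 0$. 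Recognizing that enlarging $\calS$ merely adds a PSD term to $\mbM_{\calS}$ is the structural observation that reduces the theorem to a classical log-det fact; it is also the step most prone to index and dimension bookkeeping errors, so I expect it to be the main obstacle rather than the analytic content that follows.

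Monotonicity is then immediate: for $\calS\subseteq\calT$ we have $\mbM_{\calT}-\mbM_{\calS}=\sum_{s\in\calT\setminus\calS}\mbB_s\succeq 0$, hence $\mbM_{\calS}\preceq\mbM_{\calT}$, and since $\ln\mathrm{det}(\cdot)$ is increasing in the Loewner order on positive-definite matrices, $f(\calS)\le f(\calT)$.

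For submodularity I would verify the diminishing-returns inequality $f(\calS\cup\{e\})-f(\calS)\ge f(\calT\cup\{e\})-f(\calT)$ for all $\calS\subseteq\calT$ and $e\notin\calT$. Factoring $\mbB_e=\mbC_e\mbC_e^{\H}$ and applying the matrix determinant lemma expresses the marginal gain as $f(\calS\cup\{e\})-f(\calS)=\ln\mathrm{det}(\mbI+\mbC_e^{\H}\mbM_{\calS}^{-1}\mbC_e)$. The crux is the antitonicity of the inverse: $\mbM_{\calS}\preceq\mbM_{\calT}$ forces $\mbM_{\calS}^{-1}\succeq\mbM_{\calT}^{-1}$, whence $\mbC_e^{\H}\mbM_{\calS}^{-1}\mbC_e\succeq\mbC_e^{\H}\mbM_{\calT}^{-1}\mbC_e$. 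A final application of log-det monotonicity to these two PSD matrices yields the claimed inequality. The only non-elementary ingredient is this order-reversing property of matrix inversion; the determinant lemma and log-det monotonicity handle everything else.
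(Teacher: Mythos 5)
Your proposal is correct and follows essentially the same route as the paper's proof: both pass from $\mbH_{\calS}\mbH_{\calS}^{\H}$ to the additive Gram matrix $\mbH_{\calS}^{\H}\mbH_{\calS}$ via Sylvester's identity, get monotonicity from Loewner monotonicity of $\ln\det(\cdot)$, and get submodularity from the matrix determinant lemma combined with the order-reversing property of matrix inversion. Your packaging via $\mbM_{\calS}=\mbI+\sum_{s\in\calS}\mbB_{s}$ is a slightly cleaner bookkeeping of the same argument (and your marginal-gain expression $\ln\mathrm{det}(\mbI+\mbC_{e}^{\H}\mbM_{\calS}^{-1}\mbC_{e})$ correctly retains the identity term that the paper's final displayed ratio drops).
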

\begin{proof}
For  $\calS \subseteq \calT \subseteq \calA $, a  monotone  set function  should satisfy 
\begin{equation}
 f(\calS \cup \calT) \geq f(\calS). 
\end{equation}
The function $f(\calS)$ is submodular if it satisfies  the  diminishing returns property~\cite{tohidi2020submodularity}, i.e., $\forall \, u \in \calA \setminus \calT$,
\begin{equation}
f(\calS \cup \{u\}) - f(\calS) \geq f(\calT \cup \{u\}) - f(\calT).
\end{equation}
In the  context of our  problem, we have 
\begin{align}\label{monotone}
 f(\calS \cup \calT)-f(\calS)&= \mathrm{ln}\left(\frac{\mathrm{det}(\mbH_{_{\calS \cup\calT}} \mbH_{_{\calS \cup\calT}}^{\H}+ \mbI)}{\mathrm{det}(\mbH_{_{\calS}} \mbH_{_{\calS}}^{\H}+ \mbI)}\right)\nonumber\\
&= \mathrm{ln}\left(\frac{\mathrm{det}(\mbH_{_{\calS \cup\calT}}^{\H} \mbH_{_{\calS \cup\calT}}+ \mbI)}{\mathrm{det}(\mbH_{_{\calS}}^{\H} \mbH_{_{\calS}}+ \mbI)}\right)\nonumber\\
&=\mathrm{ln}\left(\frac{\mathrm{det}(\mbH_{_{\calS }}^{\H} \mbH_{_{\calS }}+\mbH_{_{\calT}}^{\H} \mbH_{_{\calT}}+ \mbI)}{\mathrm{det}(\mbH_{_{\calS}}^{\H} \mbH_{_{\calS}}+ \mbI)}\right) \geq 0,
\end{align}
where the second equality follows from, without loss of  generality, $\mbH_{_{\calS \cup \calT}}=\left[\begin{array}{c|c}
\mbH_{_{\calS }}^{\H} &  \mbH_{_{\calT}}^{\H}
\end{array}\right]^{\H}$ leading to $\mbH_{_{\calS \cup\calT}}^{\H} \mbH_{_{\calS \cup\calT}}=\mbH_{_{\calS }}^{\H} \mbH_{_{\calS }}+\mbH_{_{\calT}}^{\H} \mbH_{_{\calT}}$. Additionally, positive semidefiniteness of $\mbH_{_{\calS }}^{\H} \mbH_{_{\calS }}$ i.e. $\mbH_{_{\calS }}^{\H} \mbH_{_{\calS }} \succeq \bzero $ leads to $\mbH_{_{\calS }}^{\H} \mbH_{_{\calS }} + \mbH_{_{\calT}}^{\H} \mbH_{_{\calT }}\succeq \mbH_{_{\calT}}^{\H} \mbH_{_{\calT }}$. This proves the monotonicity of the set function  $f(\calS)$.
We define $\mbH_{_{\calS \cup \{u\}}}=\left[\begin{array}{c|c}
\mbH_{_{\calS }}^{\H} &  \mbH_{_{\{u\}}}^{\H}
\end{array}\right]^{\H}$ and $\mbH_{_{\calT \cup \{u\}}}=\left[\begin{array}{c|c}
\mbH_{_{\calT }}^{\H} &  \mbH_{_{\{u\}}}^{\H}
\end{array}\right]^{\H}$, similar to~\eqref{monotone}  we have
\begin{align}
&f(\calS \cup \{u\}) - f(\calS) -f(\calT \cup \{u\})+f(\calT)\nonumber\\
&=\mathrm{ln}\left(\frac{\mathrm{det}(\mbH_{_{\calS \cup \{u\}}}^{\H} \mbH_{_{\calS \cup \{u\}}}+\mbI)\mathrm{det}(\mbH_{_{\calT}}^{\H} \mbH_{_{\calT}}+ \mbI)}{\mathrm{det}(\mbH_{_{\calS}}^{\H} \mbH_{_{\calS}}+ \mbI)\mathrm{det}(\mbH_{_{\calT \cup \{u\}}}^{\H} \mbH_{_{\calT \cup \{u\}}}+\mbI)}\right)\nonumber\\
&= \mathrm{ln}\left(\frac{\mathrm{det}(\mbH_{_{\calS}}^{\H} \mbH_{_{\calS }}+\mbH_{_{\{u\}}}^{\H} \mbH_{_{\{u\}}}+\mbI)\mathrm{det}(\mbH_{_{\calT}}^{\H} \mbH_{_{\calT}}+ \mbI)}{\mathrm{det}(\mbH_{_{\calS}}^{\H} \mbH_{_{\calS}}+ \mbI)\mathrm{det}(\mbH_{_{\calT }}^{\H} \mbH_{_{\calT}}+\mbH_{_{\{u\}}}^{\H} \mbH_{_{\{u\}}}+\mbI)}\right)\nonumber\\
&=
\mathrm{ln}\left(\frac{\mathrm{det}\left(\mbH_{_{\{u\}}}^{\H}(\mbH_{_{\calS}}^{\H} \mbH_{_{\calS}}+ \mbI)^{-1}\mbH_{_{\{u\}}}\right)}{\mathrm{det}\left(\mbH_{_{\{u\}}}^{\H}(\mbH_{_{\calT}}^{\H} \mbH_{_{\calT}}+ \mbI)^{-1}\mbH_{_{\{u\}}}\right)}\right) \geq 0,
\end{align}
where the second inequality follows from the determinant lemma and the last equality follows from $\calS \subseteq \calT$ that leads to $\mbH_{_{\calT}}^{\H} \mbH_{_{\calT}}\succeq \mbH_{_{\calS}}^{\H} \mbH_{_{\calS}}$ and, hence, $\left(\mbH_{_{\calS}}^{\H} \mbH_{_{\calS}}+\mbI\right)^{-1}\succeq \left(\mbH_{_{\calT}}^{\H} \mbH_{_{\calT}}+\mbI\right)^{-1}$.
\end{proof}
Algorithm~\ref{alg:place} below summarizes the greedy algorithm for optimally placing $M$ IRS platforms. It has been shown before \cite{shulkind2018sensor,feige1998threshold} that the worst-case  
optimality guarantee for the performance of the greedy algorithm is evaluated as
 \begin{equation}\label{eq:bound}
 f(\calS^{gr}) \geq  c^{-1}\left(1-\frac{1}{e^{c}}\right)f(\calS^{*})  \geq \left(1-\frac{1}{e}\right)f(\calS^{*}),
 \end{equation}
where $e$ is the Euler's constant, $f(\calS^{*})$ represents the optimal objective value of \eqref{eq:opt3} and  $c\in[0,1]$ is the curvature of the   submodular objective  $f(.)$ \cite{sviridenko2017optimal}:
\begin{equation}
c=1-~\underset{\{j\} \in \calA}{\textrm{minimize}} \quad   \frac{f(\calA)-f(\calA-\{j\})}{f(\{j\})}.
\end{equation}
We compute the tighter bound in~\eqref{eq:bound} numerically and use it as a benchmark to evaluate the performance of the greedy algorithm. The greedy algorithm often outperforms the worst-case lower bounds, as demonstrated in our numerical experiments.

\begin{algorithm}[H]
\caption{Greedy algorithm for IRS placement.}
    \label{alg:place}
    \begin{algorithmic}[1]
    \Statex \textbf{Input}  $\calA$
    \Statex  \textbf{Output} $\calS^{gr}$
    \State \textbf{Initialization}  $\calS=\emptyset$
    \State  \textbf{While} $|\calS| < M$ \textbf{do} 
     \State \hspace{1cm} $u^{*}\leftarrow\underset{u \in \calA \setminus \calS}{\textrm{argmax}}\quad  f\left(\calS \cup \{u\}\right)$
    \State \hspace{1cm} $\calS \leftarrow \calS \cup
      \{u^{*}\}$
      \Statex  \textbf{Return} $\calS^{gr}\leftarrow  \calS$
   \end{algorithmic}
\end{algorithm}

\section{Numerical Experiments}
We validated our algorithm through numerical experiments in which We set  $N_r=N_t=8$ antennas for the radar transmitter and receiver and the radar was located in the 2-D Cartesian plane at [$0$ m, $0$ m]. The target was at the range $d_{tr}=60$ m and azimuth angle of arrival $\theta_{tr}=\pi/4$. The discretized radar aperture $\calA$ is chosen to be the  Cartesian product of the set $\calR$ with $K_{_{\theta}}=100$ candidate ranges spaced with $\delta=1$ m  in the interval $[0,100]$m and the set $\calM$ comprised of $K_{_{r}}=12$ azimuth angles from the interval $[0,2\pi)$ spaced with $\mu=\pi/6$.

\begin{figure}[t]
\centering 
\includegraphics[width=1\columnwidth]{./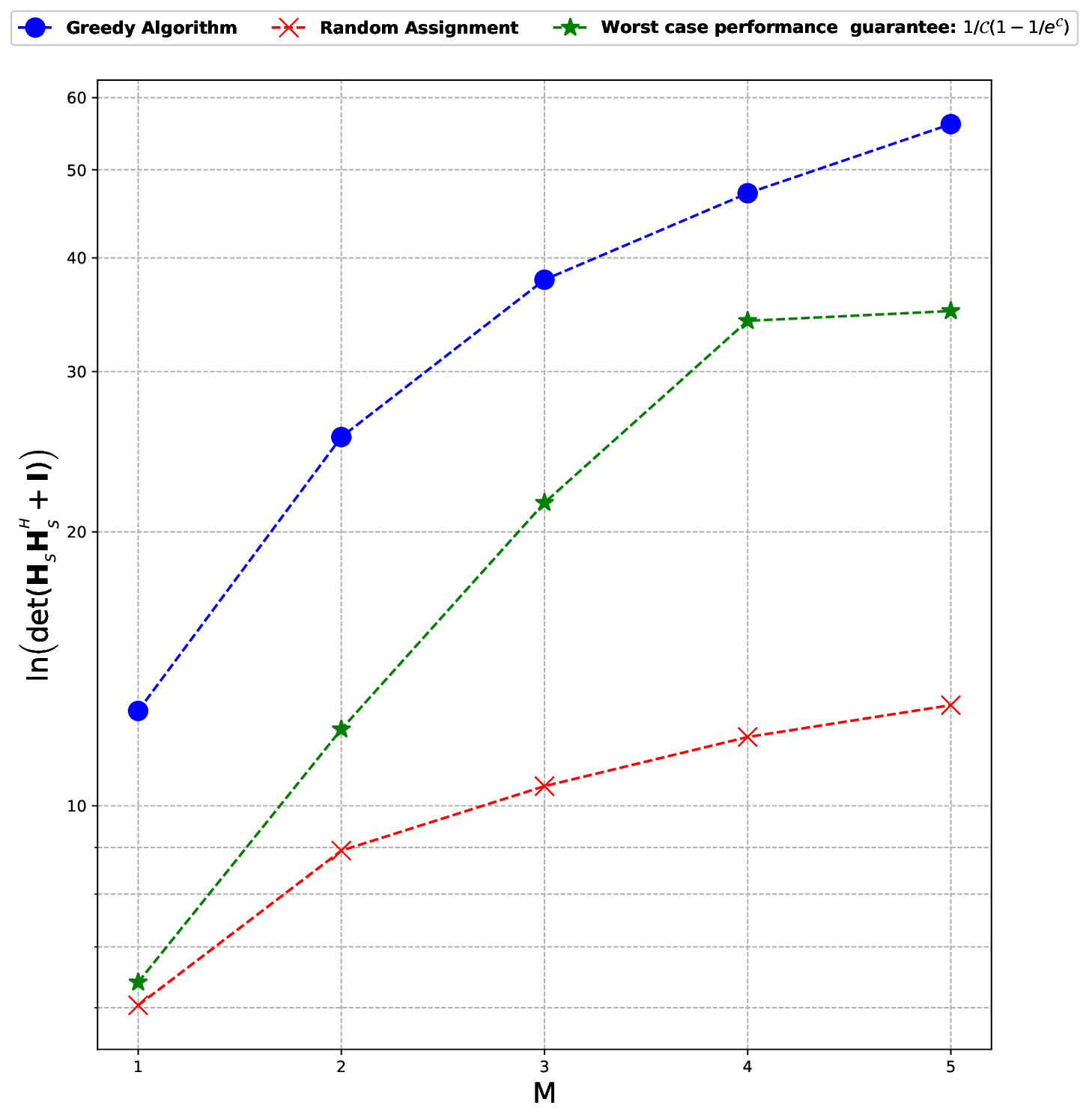}
\caption{The value of cost function of~\eqref{eq:opt3} plotted against the increasing number of IRS platforms $M$ that are sequentially placed in the radar coverage area $\mathcal{A}$.
}
\label{fig::2}
\end{figure}
\begin{figure}[t]
\centering 
\includegraphics[width=1.0\columnwidth]{./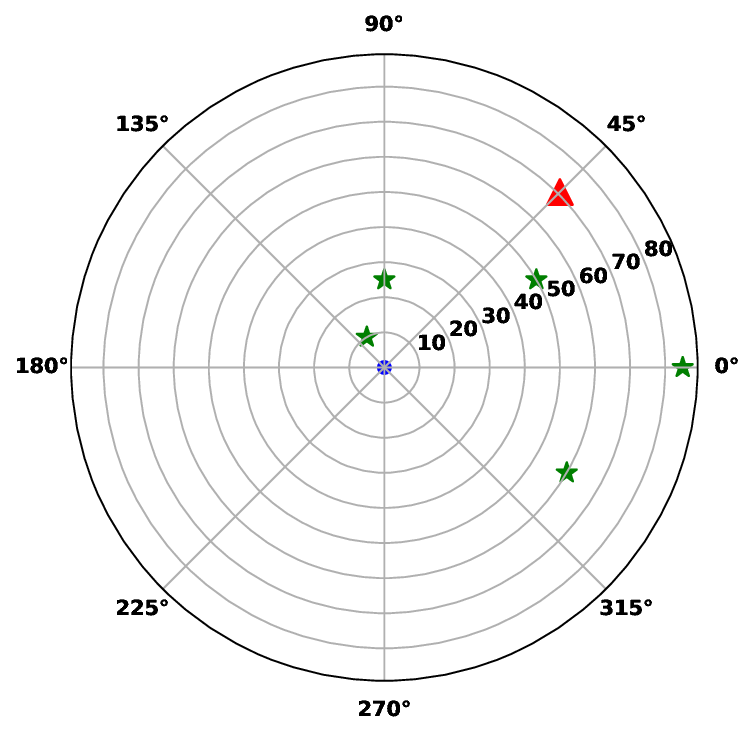}
\caption{Optimal placement of $M=5$ IRS platforms (green star) to illuminate the target (red triangle). The radar (blue circle) is located at the origin.}
\label{fig::3}
\end{figure}

We employed Algorithm~\ref{alg:place} to decide the  ranges and azimuth angles $(r_m,\theta_{ri,m}) \in \calA$ of $M$ IRS platforms tuned at randomly generated phase-shifts comprised in  $\bPhi_m$, for  $m \in \{1,\ldots,M\}$. Fig.~\ref{fig::2} compares the optimized MI-based cost function $f(\calS)$ for different numbers $M$ of
deployed IRS platforms with random placement and the worst-case optimality guarantee of the greedy algorithm. In order to 
 compute the worst-case optimality guarantee of the greedy algorithm, we computed the curvature of the submodular objective function defined in~\eqref{eq:bound} through an exhaustive search. Our experiments showed that $c \approx 0.996$. Fig.~\ref{fig::3} illustrates the optimal IRS locations in polar coordinates as obtained by Algorithm~\ref{alg:place}, for $M=5$ IRS platforms.

\section{Summary}
We considered an IRS-assisted sensing system and optimized the placement of IRS platforms using the recently proposed tools of submodular optimization. We showed that our MI-based design criterion is submodular and monotonic. The greedy algorithm for submodular optimization yields the optimal placement of IRS platforms in the radar coverage area with a constant suboptimality guarantee. The IRS-aided radar with optimized IRS locations outperforms random placement and the theoretical worst-case optimality guarantee of the greedy algorithm.

\bibliographystyle{IEEEtran}
\footnotesize{\bibliography{refs}}
\end{document}